\DeclareMathOperator{\cyc}{cyc}
\DeclareMathOperator{\dfa}{DFA}
\def\modd#1 #2{#1\ ({\rm mod}\ #2)}
\title{Shortest Repetition-Free Words Accepted by Automata}
\author{Hamoon Mousavi \and Jeffrey Shallit}
\institute{School of Computer Science, University of Waterloo,
Waterloo, ON  N2L 3G1 Canada  \\
\email{ \{sh2mousa,shallit\}@uwaterloo.ca}
}
\begin{document}

\maketitle

\begin{abstract}
We consider the following problem:  given that a finite automaton $M$ of
$N$ states accepts at least one $k$-power-free (resp., overlap-free) word,
what is the length of the shortest such word accepted?  We give upper
and lower bounds which, unfortunately, are widely separated.
\end{abstract}

\section{Introduction}

Let $L$ be an interesting language, such as the language of primitive words,
or the language of non-palindromes.  We are interested in the following
kind of question: \emph{given that an automaton $M$ of $N$ states accepts
a member of $L$, what is a good bound
on the length $\ell(N)$ of the
shortest word accepted?}

For example,
Ito et al.\ \cite{ito} proved that if $L$ is the language of
primitive words, then $\ell(N) \leq 3N-3$.  
Horv\'ath et al.\ \cite{horvath} proved that 
if $L$ is the language of non-palindromes, then $\ell(N) \leq 3N$.  
For additional results along these lines, see
\cite{anderson}.

For an integer $k \geq 2$, a {\it $k$-power} is a nonempty word of the
form $x^k$.  A word is {\it $k$-power-free} if it has no $k$-powers as
factors.  A word of the form $axaxa$, where $a$ is a single letter, and
$x$ is a (possibly empty) word, is called an \emph{overlap}.  A word is
{\it overlap-free} if it has no factor that is an overlap.

In this paper we address two open questions left unanswered in 
\cite{anderson}, corresponding to the case
where $L$ is the language of $k$-power-free (resp., overlap-free) words.
For these words and a large enough alphabet
we give a class of DFAs of $N$ states for which the shortest 
$k$-power (resp., overlap) is of length $N^{{1\over 4} (\log N) + O(1)}$.
For overlaps over a binary alphabet we give an upper bound of $2^{O(N^{4N})}$.

\section{Notation}
For a finite alphabet $\Sigma$, let $\Sigma^*$ denote the set of finite
words over $\Sigma$. Let $w = a_0a_1\cdots a_{n-1} \in \Sigma^*$ be a word. Let
$w[i] = a_i$, and let $w[i..j] = a_i\cdots a_j$.  By convention
we have $w[i] =
\epsilon$ for $i < 0$ or $i \geq n$, and $w[i..j] = \epsilon$ for $i > j$.
A prefix $p$ of $w$ is a \emph{period} of $w$ if $w[i+r] =
w[i]$ for $0 \leq i < |w|-r$, where $r = |p|$.

For words $x,y$, let $x \preceq y$
denote that $x$ is a factor of $y$. A factor $x$ of $y$ is \emph{proper}
if $x \neq y$. Let $x \preceq_p y$ (resp.,
$x \preceq_s y$) denote that $x$ is a prefix (resp., suffix) of $y$. Let $x \prec_p y$ (resp., $x \prec_s y$) denote that $x$ is a prefix (resp., suffix) of $y$ and $x \neq y$.

A word is {\it primitive} if it is not a $k$-power for any $k \geq 2$.
Two words $x, y$ are {\it conjugate} if one is a cyclic shift of the
other; that is, if there exist words $u, v$ such
that $x = uv$ and $y = vu$. One simple observation is that all conjugates of a $k$-power are $k$-powers. 

Let $h: \Sigma^* \rightarrow \Sigma^*$ be a morphism, and suppose $h(a)
= ax$ for some letter $a$.  The {\it fixed point} of $h$, starting with
$a\in \Sigma$, is denoted by $h^\omega(a) = a\, x \, h(x) \, h^2(x) \,
\cdots$.  We say that a morphism $h$
is $k$-power-free (resp., overlap-free) if 
$h(w)$ is $k$-power-free (resp., overlap-free) if $w$ is.

Let $\Sigma_m = \{0,1,\ldots,m-1\}$.
Define the morphism $\mu:\Sigma_2^*\rightarrow \Sigma_2^*$ as follows
\begin{align*}
&\mu(0) = 01\\
&\mu(1) = 10.
\end{align*}
We call $\mathbf{t} = \mu^\omega(0)$ the {\it Thue-Morse word}. It is
easy to see that
$$\mu(\mathbf{t}[0..n-1]) = \mathbf{t}[0..2n-1] \text{
for } n\geq 0.$$ 
From classical
results of Thue \cite{thue06,thue12}, we know that the morphism $\mu$
is overlap-free. From \cite{brandenburg83}, we know that that
$\mu(x)$ is $k$-power free for each $k > 2$.

For a $\dfa D = (Q,\Sigma,\delta,q_0,F)$ the set of states, input
alphabet, transition function, set of final states, and initial state
are denoted by $Q,\Sigma,\delta,F,$ and $q_0$, respectively. Let $L(D)$
denote the language accepted by $D$. As usual, we have
$\delta(q,wa)=\delta(\delta(q,w),a)$ for a word $w$.

We state the following basic result without proof. 
\begin{proposition}\label{prop:basicTheorem}
Let $D = (Q,\Sigma,\delta,q_0,F)$ be a (deterministic or
nondeterministic) finite automaton. If $L(D)\neq \emptyset$, then $D$
accepts at least one word of length smaller than $|Q|$.
\end{proposition}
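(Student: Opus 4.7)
The plan is to argue by contradiction via a pigeonhole/loop-removal argument on an accepting computation.

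First I would pick a shortest word $w \in L(D)$, which exists since $L(D) \neq \emptyset$ and word lengths are well-ordered. Suppose for contradiction that $|w| \geq |Q|$. Fix an accepting computation on $w$: this is a sequence of states $q_0, q_1, \ldots, q_{|w|}$ with $q_{|w|} \in F$, where for the deterministic case $q_{i+1} = \delta(q_i, w[i])$, and for the nondeterministic case we simply fix any such accepting sequence from the definition of acceptance. Note the sequence has $|w|+1 \geq |Q|+1$ entries.

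Next, by the pigeonhole principle, there must exist indices $0 \leq i < j \leq |w|$ with $q_i = q_j$. Then the word $w' = w[0..i-1]\,w[j..|w|-1]$, obtained by excising the factor read along the loop from $q_i$ back to itself, is strictly shorter than $w$. I would verify that the shortened state sequence $q_0,\ldots,q_i,q_{j+1},\ldots,q_{|w|}$ is still a valid accepting computation of $D$ on $w'$: each transition used in it was already present in the original computation, and the final state $q_{|w|}$ is unchanged and hence still in $F$. This produces a shorter accepted word, contradicting the minimality of $w$.

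The argument is essentially routine; the only minor subtlety is handling the nondeterministic case correctly, which is resolved by working with a fixed accepting state sequence rather than with the transition function. I would therefore expect no real obstacle, and present the proof as a short pigeonhole/loop-cutting argument concluding $|w| < |Q|$.
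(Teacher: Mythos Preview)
Your proposal is correct and is the standard pigeonhole/loop-removal argument for this classical fact. Note that the paper itself states this proposition \emph{without proof}, treating it as a basic well-known result; your write-up is exactly the routine argument one would supply if asked to fill in the details.
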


\section{Lower bound}

In this section, we construct an infinite family of $\dfa$s such that
the shortest $k$-power-free word accepted is rather long, as a function
of the number of states.  Up to now only a linear bound was known.

For a word $w$ of length $n$ and $i \geq 1$, let $$\cyc_i(w) = w[i..n-1]\,w[0.. i-2]$$ denote $w$'s $i$th cyclic shift to the left, followed by removing the last symbol. Also define $$\cyc_0(w)=w[0..n-2].$$ For example, we have
\begin{align*}\tt
&\cyc_2 ({\tt recompute}) = {\tt computer},\\
&\cyc_4 ({\tt richly}) = {\tt lyric}.
\end{align*}

 We call each $\cyc_i(w)$ a {\it partial conjugate} of $w$, which is not a reflexive, symmetric, or transitive relation.

A word $w$ is a {\it simple $k$-power} if it is a $k$-power and it contains no $k$-power as a proper factor.

We start with a few lemmas. 

\begin{lemma}\label{Lemma:distinctConjugates}
	Let $w = p^k$ be a simple $k$-power. Then the word $p$ has $|p|$ distinct conjugates.
\end{lemma}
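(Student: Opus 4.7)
The plan is to reduce the statement to the standard fact that a word $p$ has exactly $|p|$ distinct conjugates if and only if $p$ is primitive. This equivalence is classical: if $p = q^m$ with $m \geq 2$, then cyclically shifting $p$ by $|q|$ positions yields $p$ itself, so the conjugates of $p$ repeat with period at most $|q| < |p|$; conversely, if all $|p|$ cyclic shifts are pairwise distinct, no shorter period divides $|p|$ and $p$ must be primitive. Under this reduction, it suffices to prove that the ``root'' $p$ of a simple $k$-power $w = p^k$ is primitive.

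I would prove primitivity by contradiction. Suppose $p$ is not primitive, so that $p = q^m$ for some nonempty word $q$ and some integer $m \geq 2$. Substituting into $w = p^k$ gives $w = q^{mk}$, and the prefix $q^k$ of $w$ is itself a $k$-power (since $q$ is nonempty). Its length $k|q|$ is strictly less than $|w| = km|q|$ because $m \geq 2$, so $q^k$ is a \emph{proper} factor of $w$. This contradicts the assumption that $w$ is a simple $k$-power, which forbids any $k$-power from appearing as a proper factor. Hence $p$ is primitive, and by the equivalence above it has $|p|$ distinct conjugates.

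There is no real obstacle here: the only subtlety is invoking the primitive-root equivalence cleanly, and everything else follows from substituting $p = q^m$ and observing that the exponent multiplies. I would make the argument self-contained by stating the equivalence explicitly (or at least the direction ``not primitive $\Rightarrow$ fewer than $|p|$ conjugates'') so that the reader is not left to fill in the standard lemma.
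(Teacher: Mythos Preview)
Your proof is correct and follows essentially the same reduction as the paper: both arguments hinge on showing that $p$ is primitive and then on the classical equivalence between primitivity and having $|p|$ pairwise distinct conjugates. The only difference is one of emphasis---the paper merely asserts that $p$ is primitive (which you spell out) and instead devotes its effort to proving the equivalence via the Lyndon--Sch\"utzenberger theorem, whereas you cite that equivalence as a standard fact.
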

\begin{proof}
	By contradiction. If $p^k$ is a simple $k$-power, then $p$ is a primitive word. Suppose that $p = uv = xy$ such that $x \prec_p u$ and $yx = vu$. Without loss of generality, we can assume that $xv \neq \epsilon$. Then there exists a word $t \neq \epsilon$ such that $u = xt$ and $y = tv$. From $vu=yx$ we get $$vxt = tvx.$$ Using the second theorem of Lyndon and Sch\"{u}tzenberger 
\cite{lyndon}, we get that there exists $z \neq \epsilon$ such that 
	\begin{equation*}
		vx = z^i
	\end{equation*}
	\begin{equation*}
		t = z^j
	\end{equation*}
	for some positive integers $i,j$. So $yx = z^{i+j}$, and hence $p=xy$ is not primitive, a contradiction.
	\qed
\end{proof}

\begin{lemma}\label{Lemma:distinctPartialConjugates}
	Let $w$ be a simple $k$-power of length $n$. Then we have 
	\begin{equation}\label{equivPartialConjugates}
		\cyc_i(w) = \cyc_j(w) \text{ iff }  i \equiv \modd{j} {{n \over k}}.
	\end{equation}
\end{lemma}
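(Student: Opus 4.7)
The plan is to reduce the claim to Lemma~\ref{Lemma:distinctConjugates} by identifying $\cyc_i(w)$ as a prefix of a cleanly described $k$-power. Write $w = p^k$ and set $m := |p| = n/k$, and let $q_i$ denote the cyclic shift of $p$ obtained by rotating $i \bmod m$ positions to the left. The key preliminary observation to establish is that the full $i$th left cyclic shift $w[i..n-1]\,w[0..i-1]$ equals $q_i^k$: since $w[t] = p[t \bmod m]$, the $t$th entry of this full shift is $p[(i+t) \bmod m]$, which is periodic in $t$ with period $m$ and whose initial period is precisely $q_i$. Hence $\cyc_i(w)$, which is obtained by stripping the last letter of the full shift (both in the case $i \geq 1$ and the case $i = 0$ by the given definitions), is exactly the length-$(n-1)$ prefix of $q_i^k$.

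With this identification in hand, the ``$\Leftarrow$'' direction is immediate: if $i \equiv j \pmod m$ then $q_i$ and $q_j$ are the same conjugate of $p$, so $q_i^k = q_j^k$ and their length-$(n-1)$ prefixes coincide.

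For ``$\Rightarrow$'', suppose $\cyc_i(w) = \cyc_j(w)$. Because $k \geq 2$ we have $n - 1 \geq m$, so the first $m$ symbols of the two length-$(n-1)$ prefixes must agree. But those first $m$ symbols are precisely $q_i$ and $q_j$. Since $p^k$ is a simple $k$-power, $p$ is primitive, and Lemma~\ref{Lemma:distinctConjugates} tells us $p$ has $m$ distinct conjugates. Thus $q_i = q_j$ forces $i \bmod m = j \bmod m$, i.e., $i \equiv j \pmod{m}$, as required.

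The only step requiring real care is the indexing verification underlying the preliminary observation that $\cyc_i(w)$ is the $(n-1)$-prefix of $q_i^k$; once that identification is made, both directions collapse to a single invocation of Lemma~\ref{Lemma:distinctConjugates}.
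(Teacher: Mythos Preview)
Your proof is correct and follows essentially the same route as the paper's: both identify $\cyc_i(w)$ with the length-$(n-1)$ prefix of $q_i^k$ (the paper writes this as $q_{i'}^{k-1}\cyc_{i'}(p)$ with $i' = i \bmod m$), and both then invoke Lemma~\ref{Lemma:distinctConjugates} on the primitive root $p$ to finish. Your periodicity argument for the preliminary identification is a slightly cleaner packaging of what the paper verifies by direct index computation, but the underlying strategy is the same.
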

\begin{proof}
	Let $w = p^k$. If $i \equiv i'$ (mod $\frac{n}{k}$) and $i' < \frac{n}{k}$, then
	\begin{equation*}
		\cyc_i(w)=(p[i'..\frac{n}{k}-1]\,p[0..i'-1])^{k-1}\cyc_{i'}(p).
	\end{equation*} 
	
	Similarly,
if $j \equiv j'$ (mod $\frac{n}{k}$) and
$j' < \frac{n}{k}$, then
$$\cyc_j(w) = (p[j'..\frac{n}{k}-1]\,p[0..j'-1])^{k-1}\cyc_{j'}(p).$$
	
	If $i' = j'$, then clearly $\cyc_i(w)=\cyc_j(w)$. If $i' \neq j'$, we get that $$p[i'..\frac{n}{k}-1]\,p[0..i'-1]\neq p[j'..\frac{n}{k}-1]\,p[0..j'-1]$$ using Lemma~\ref{Lemma:distinctConjugates}, and hence $\cyc_i(w) \neq \cyc_j(w)$.\\
	\qed
\end{proof}

\begin{lemma}\label{Lemma:conjugatesOfSimplePowers}
	All conjugates of a simple $k$-power are simple $k$-powers.
\end{lemma}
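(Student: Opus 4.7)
The plan is to proceed by contradiction. Assume $w = p^k$ is a simple $k$-power (so $p$ is primitive) and some conjugate $w'$ of $w$ has a proper $k$-power factor; I will derive a contradiction with the simplicity of $w$. Since every conjugate of a $k$-power is itself a $k$-power whose base is a conjugate of the original, write $w' = q^k$ with $q$ a (necessarily primitive) conjugate of $p$. Let the proper factor be $y^k$; replacing $y$ by its primitive root yields a shorter such factor if needed, so I may assume $y$ is primitive, and $|y| < |q|$ since $|y^k| < |w'|$.

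The central observation is that any factor of $q^k$ of length $\ell \leq (k-1)|q|+1$ is determined entirely by the underlying cyclic word: such $\ell$-factors come from $k|q| - \ell + 1 \geq |q|$ consecutive starting positions, which cover all cyclic residues modulo $|q|$, so by the period-$|q|$ symmetry of $q^k$ they coincide with the $\ell$-factors of $p^k$. In the easy case $k|y| \leq (k-1)|q|+1$, this immediately shows $y^k$ is also a proper factor of $w$, contradicting simplicity.

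For the harder case $k|y| > (k-1)|q|+1$, arithmetic gives $|q|/|y| < k/(k-1) \leq 2$, i.e.\ $|q| < 2|y|$. When $k \geq 3$, the word $y^k$ has period $|y|$ trivially and period $|q|$ as a sufficiently long window inside $q^k$; I plan to check that $(k-1)^2 \geq k$ in this range forces $(k-1)|y| > |q|$, which is more than enough for Fine and Wilf to conclude that $y^k$ has period $d = \gcd(|y|,|q|)$. Then $y$, being primitive, must satisfy $d = |y|$, i.e.\ $|y|$ divides $|q|$---incompatible with $|y| < |q| < 2|y|$.

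The remaining case $k = 2$ needs a structural argument, since the Fine--Wilf condition just fails. After cyclically rotating $q$ I may assume $y^2$ is a prefix of $q^2$ (cyclic shifts of a square $q^2$ have the form $q''^2$ for cyclic shifts $q''$ of $q$). Comparing the first $|q|$ characters of $y^2 = yy$ with $q$ forces the decomposition $q = \alpha\beta\alpha$, where $\alpha$ is the prefix of $y$ of length $|q|-|y|$ and $\beta$ is the suffix of $y$ of length $2|y|-|q|$. Then $q^2 = \alpha\beta\alpha\alpha\beta\alpha$ contains $\alpha^2$ as a factor of length $2(|q|-|y|) \leq |q|+1 = (k-1)|q|+1$, and applying the easy-case observation to $\alpha^2$ produces a proper $2$-power factor of $w$, the final contradiction. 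The main obstacle is exactly this $k=2$ subcase: Fine and Wilf falls just short of applying, and the palindrome-like structure of $q$ must be exploited to manufacture a shorter $k$-power factor that the cyclic-factor argument can handle.
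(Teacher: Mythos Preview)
Your proof is correct but proceeds along quite different lines from the paper's. The paper gives a single uniform argument: it embeds the conjugate $z$ (your $w'=q^k$) inside $x=p^{k+1}$, writes $x=uy^kv$ (in your notation), sets $e=|p|-|y|$, and then uses the period $|p|$ of $x$ to slide the $k$ occurrences of $y$ back by multiples of $|p|$, producing a $k$-th power of a word of length $e$ lying inside $p^k=w$. No case split on $k$ and no appeal to Fine--Wilf is needed.

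By contrast, you split into an easy case (handled by the clean ``cyclic factors of length $\le (k-1)|q|+1$ coincide'' observation), a $k\ge 3$ case dispatched by Fine--Wilf, and a $k=2$ case where the decomposition $q=\alpha\beta\alpha$ manufactures the shorter square $\alpha^2$. It is worth noting that your $\alpha$ has length $|q|-|y|=e$, so in the $k=2$ case you are locating exactly the same short power the paper constructs, just via a structural rather than positional argument. What your route buys is modularity---each case uses a standard off-the-shelf tool---at the cost of the case analysis; the paper's route is uniform in $k$ but requires the index-shifting computation inside $p^{k+1}$.
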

\begin{proof}
	By contradiction. Let $w=p^k$ be a simple $k$-power, and let $z \neq w$ be a conjugate of $w$. Clearly $z$ is a $k$-power. Suppose $z$ contains $q^k$ and $z \neq q^k$. Thus $|q| < |p|$. Since $w$ is simple $q^k \npreceq w=p^k$. The word $x=p^{k+1}$ contains $z$ as a factor. So $x = uq^kv$, for some words $u,v \preceq p$. 
	\begin{figure}[!h]
	\begin{tikzpicture}		
	\begin{scope}[xscale = 1.75,yscale=.5]
		\draw (0,0)  rectangle (1,1)
	 	      (1,0)  rectangle (2,1)
		      (2,0)  rectangle (3,1)
		      (3,0)  rectangle (4,1);
		\path (4.1,.3) node[right]{$\cdots$}rectangle(4.5,.5);      
		\draw (4.5,0)  rectangle (5.5,1)
		      (5.5,0)  rectangle (6.5,1);
		      
		\path (0,.4) node [left]{$x=$} rectangle (0,.4);
		\path (0,1.4) node [left]{$uq^kv=$} rectangle (0,1.4);
		
		\path (.35,.5) node [above,right]{$p$} rectangle (.65,.5)
	 	      (1.35,.5) node [right]{$p$} rectangle (1.65,.5)
		      (2.35,.5) node [right]{$p$} rectangle (2.65,.5)
		      (3.35,.5) node [right]{$p$} rectangle (3.65,.5)
		      (4.85,.5) node [right]{$p$} rectangle (5.15,.5)
		      (5.85,.5) node [right]{$p$} rectangle (6.15,.5);
		
		\draw (0,1)  rectangle (.9,2)
		 	  (.9,1)  rectangle (1.8,2)
			  (1.8,1)  rectangle (2.7,2)
			  (2.7,1)  rectangle (3.6,2);
		\path (4.1,1.4)
		node[right]{$\cdots$}rectangle(4.5,1.5);      	  
		\draw (4.8,1)  rectangle (5.7,2)
			  (5.7,1)  rectangle (6.5,2);
			  
	 	\path (.30,1.5) node [above,right]{$u$} rectangle (.5,1.5)
	 	      (1.2,1.5) node [right]{$q$} rectangle (1.6,1.5)
		      (2.1,1.5) node [right]{$q$} rectangle (2.5,1.5)
		      (3,1.5) node [right]{$q$} rectangle (3.5,1.5)
		      (5.1,1.5) node [right]{$q$} rectangle (5.5,1.5)
		      (6,1.5) node [right]{$v$} rectangle (6.5,1.5);
		
		\draw[->] (.95,2.5) node[above]{$|u|$} -- (.95,2);
		\draw[->] (1.85,2.5) node[above]{$|pu|-e$} -- (1.85,2);
		\draw[->] (2.75,2.5) node[above]{$|p^2u|-2e$} -- (2.75,2);
		\draw[->] (4.85,2.5) node[above]{$|p^{k-1}u|-(k-1)e$} -- (4.85,2);
	\end{scope}
	\end{tikzpicture}
	\caption{starting positions of the occurrences of $q$ inside $x$}
	\label{positionsOfQinsideW}
	\end{figure}
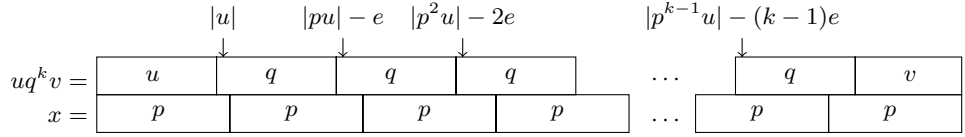\\
	 Note that $u$ and $v$ are nonempty and not equal to $p$ since $q^k \npreceq p^k$. Letting $e := |p|-|q|$, and considering
the starting positions of the occurrences of $q$ in $x$ 
(see Fig.~\ref{positionsOfQinsideW}), we can write 
	\begin{equation*}\label{equationInW}
	x[|p^iu|-ie..|p^iu|-(i-1)e-1]=x[|p^ju|-je..|p^ju|-(j-1)e-1]
	\end{equation*}
	for every $0 \leq i,j < k$. Since $p$ is a period of $x$, we can write
	\begin{equation*}
	x[|u|-ie..|u|-(i-1)e-1]=x[|u|-je..|u|-(j-1)e-1]
	\end{equation*}
	which means $x[u-(k-1)e..u+e-1] \preceq w$ is a $k$-power. Therefore $w$ contains a $k$-power other than itself, a contradiction. 
	\qed
\end{proof}

\begin{corollary}\label{corollary}
Partial conjugates of simple $k$-powers are $k$-power-free. 
\end{corollary}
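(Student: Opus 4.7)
\bigskip
\noindent\emph{Proof proposal.} The plan is to reduce the corollary directly to Lemma~\ref{Lemma:conjugatesOfSimplePowers} together with the very definition of \emph{simple} $k$-power. Let $w$ be a simple $k$-power of length $n$, and fix $i$. I would first observe that the partial conjugate $\cyc_i(w)$ is a prefix of a genuine conjugate of $w$: for $i \geq 1$ we have
\begin{equation*}
  w' := w[i..n-1]\,w[0..i-1], \qquad \cyc_i(w) = w'[0..n-2],
\end{equation*}
and for $i = 0$ we simply have $\cyc_0(w) = w[0..n-2]$, a prefix of $w$ itself. In either case, $\cyc_i(w)$ is a proper factor of a conjugate of $w$.

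Next I would invoke Lemma~\ref{Lemma:conjugatesOfSimplePowers} to conclude that $w'$ (or $w$ in the case $i=0$) is itself a simple $k$-power. By the definition of ``simple,'' $w'$ contains no $k$-power as a proper factor. The proof then finishes by contradiction: if $\cyc_i(w)$ were not $k$-power-free, it would contain some factor of the form $u^k$, and since
\begin{equation*}
  |u^k| \;\leq\; |\cyc_i(w)| \;=\; n-1 \;<\; n \;=\; |w'|,
\end{equation*}
this $u^k$ would be a proper factor of $w'$, contradicting the fact that $w'$ is a simple $k$-power. Therefore $\cyc_i(w)$ contains no $k$-power and is $k$-power-free.

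The whole argument is essentially a one-line reduction once Lemma~\ref{Lemma:conjugatesOfSimplePowers} is in hand, so I do not anticipate any real obstacle; the only point requiring a small check is the bookkeeping that identifies $\cyc_i(w)$ with a length-$(n-1)$ prefix of a conjugate, which is immediate from the definition of $\cyc_i$.
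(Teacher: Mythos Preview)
Your argument is correct and is exactly the intended one: the paper states this as an immediate corollary of Lemma~\ref{Lemma:conjugatesOfSimplePowers} without further proof, and your reduction---identify $\cyc_i(w)$ as a length-$(n-1)$ prefix of a conjugate $w'$, invoke the lemma to get that $w'$ is simple, then note any $k$-power in $\cyc_i(w)$ would be a proper factor of $w'$---is precisely the one-line justification the paper leaves implicit.
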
 

The next lemma shows that there are infinitely many simple $k$-powers
over a binary alphabet for $k>2$. We also show that there are
infinitely many simple squares over a ternary alphabet, using a result
of Currie \cite{currie02}. 

\begin{lemma}\label{simpleCubeExist} $  $
	\begin{itemize}
		\item[(i)] Let $p = \mathbf{t}[0.. 2^n - 1]$ where $n
		\geq 0$. For every $k > 2$, the word $p^k$ is a simple
		$k$-power.  \item[(ii)]\label{simpleCubeExistPart2} There are infinitely many
		simple squares over a ternary alphabet.
\end{itemize}

\end{lemma}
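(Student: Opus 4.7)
For (i), the word $p^k$ is plainly a $k$-power, so the task is to rule out proper $k$-power factors. I would proceed by contradiction: suppose $q^k \preceq p^k$ is such a factor, so necessarily $|q| < |p| = 2^n$. The key input is that $p^2 \preceq \mathbf{t}$: since the excerpt gives $p = \mathbf{t}[0..2^n-1] = \mu^n(0)$, we have $p^2 = \mu^n(00)$, and the fact $00 \preceq \mathbf{t}$ combined with $\mu(\mathbf{t}) = \mathbf{t}$ yields $p^2 \preceq \mathbf{t}$ by iteration. Since $\mathbf{t}$ is cube-free by Brandenburg's cited result, $\mathbf{t}$ contains no cube factor.

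I would apply Fine and Wilf's theorem to $q^k$, which has periods $|q|$ and $|p|$ (the latter inherited from $p^k$). If the length hypothesis $k|q| \geq |q| + |p| - \gcd(|q|,|p|)$ holds, then $q^k$ has period $d := \gcd(|q|,|p|) = 2^m$ for some $m < n$, and writing $r$ for the length-$d$ prefix of $q^k$ we obtain $q^k = r^{k|q|/d}$ with exponent at least $k \geq 3$; in particular the cube $r^3 \preceq p^k$ has length $3 \cdot 2^m \leq (3/2)|p|$. If instead the length hypothesis fails, then $(k-1)|q| < |p| - d \leq |p| - 1$, whence $k|q| < k(|p|-1)/(k-1) \leq (3/2)|p|$ since $k \geq 3$, so $q^k$ itself is a short cube factor. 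Either way, we obtain a cube factor of $p^k$ of length at most $(3/2)|p|$; by the periodicity of $p^k$ together with $p^2 \preceq \mathbf{t}$ and a short inspection of the local structure of $\mathbf{t}$ around $p^2$, this cube must also be a factor of $\mathbf{t}$, contradicting cube-freeness.

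For (ii) I would invoke Currie's theorem \cite{currie02}, which produces ternary ``circularly square-free'' words $w$ of every length $n \geq 18$ (meaning every conjugate of $w$ is square-free). For such $w$, $w$ is primitive and $w^2$ is a simple square: any $q^2 \preceq w^2$ with $|q| \leq |w|/2$ would be a prefix of some conjugate of $w$, contradicting square-freeness; and any $q^2 \preceq w^2$ with $|w|/2 < |q| < |w|$ would force the length-$|w|$ factor $w^2[i..i+|w|-1]$ (which is a conjugate of $w$) to have a border of length $|w| - |q|$, so rotating by that amount yields a shorter square factor $\gamma^2$ in a further conjugate, once more contradicting circular square-freeness. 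Letting $n \to \infty$ yields infinitely many simple squares over a ternary alphabet. The main obstacle in (i) is the justification that short cube factors of $p^k$ embed in $\mathbf{t}$; in (ii), it is the border/rotation step for intermediate-length squares.
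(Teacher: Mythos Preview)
Your route for (i) differs from the paper's. The paper argues by induction on $n$: writing $p^k=\mu(q^k)$ with $q=\mathbf{t}[0..2^{n-1}-1]$, a proper factor $x^k$ inside $p^k=ux^kv$ with (w.l.o.g.) $|v|\ge\lceil k/2\rceil\ge 2$ gives $ux^k\preceq_p\mu\bigl(q^{k-1}q[0..|q|-2]\bigr)$; since $\mu$ is $k$-power-free, $q^{k-1}q[0..|q|-2]$ already contains a $k$-power, so $q^k$ has a proper $k$-power factor, contradicting the inductive hypothesis. No Fine--Wilf, no embedding into $\mathbf{t}$.

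Your Fine--Wilf argument is sound, and the obstacle you flag can be dispatched explicitly. The short cube you produce has length at most $\tfrac{3}{2}|p|$ and lies in $p^3$; by periodicity it occurs at some position $0\le i<|p|$. If $i\le|p|/2$ it sits inside $p^2=\mu^n(00)\preceq\mathbf{t}$. If $i>|p|/2$, write $p=\mu^{n-1}(0)\mu^{n-1}(1)$; then the cube sits inside $p^3[|p|/2..\tfrac{5}{2}|p|-1]=\mu^{n-1}(1010)$, and $1010\preceq\mathbf{t}$ gives $\mu^{n-1}(1010)\preceq\mathbf{t}$. Either way a cube lands in $\mathbf{t}$, a contradiction. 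The paper's induction is shorter and uses only the $k$-power-freeness of $\mu$; your approach trades that for the single global fact that $\mathbf{t}$ is cube-free, at the cost of this extra embedding step.

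For (ii) both you and the paper invoke Currie and then show $w^2$ is simple whenever $w$ is circularly square-free. The paper takes a minimal proper square $q^2\preceq w^2$, notes $|q^2|>|w|$, writes $q^2=tyut$, and exhibits the smaller square $t^2$ at the junction of the two copies of $q$. Your border/rotation step is correct and is essentially the same observation: the conjugate $c=w^2[i..i+|w|-1]$ has a border $\alpha$ of length $|w|-|q|<|w|/2$, and the conjugate of $c$ starting at position $|q|$ then begins with $\alpha\alpha$, a square in a conjugate of $w$.
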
 \begin{proof} $  $ \begin{itemize}
	\item[(i)] By induction on $n$. For $n = 0$ we have $p^k = 0^k$
	which is a simple $k$-power. Suppose $n > 0$. To get a
	contradiction, suppose that there exist words $u,v,x$ with $uv
	\neq \epsilon$ and $x \neq \epsilon$ such that $p^k = ux^kv$.
	Note that $|x| < |p|$, so $|uv| \geq k$. Without loss of
	generality, we can assume that $|v| \geq
	\lceil\frac{k}{2}\rceil\geq 2$. Let $q = \mathbf{t}[0.. 2^{n-1}
	- 1]$. We know that $$p^k =\mu(q^k).$$ We can write $$w = ux^k
	\preceq_p \mu(q^{k-1}q[0..|q|-2]).$$ Since $\mu$ is
	$k$-power-free, the word $q^{k-1}q[0..|q|-2]$ contains a
	$k$-power. Hence $q^k$ contains at least two $k$-powers, a
	contradiction.    \\ \item[(ii)] Currie \cite{currie02} proved
	that over a ternary alphabet, for every $n \geq 18$, there is a
	word $p$ of length $n$ such that all its conjugates are
	squarefree. Such squarefree words are called \emph{circularly
	squarefree words}. 

	\medskip
	
	We claim that for every circularly
	squarefree word $p$, the word $p^2$ is a simple square.  To get 
	a contradiction, let $q^2$ be the smallest square in $p^2$. So
	there exist words $u,y$ with $uy \neq \epsilon$ such that $p^2
	= uq^2y$. We have $|q^2| > |p| $ since $p$ is circularly
	squarefree. Therefore, if we let $p=uv=xy$, then $|x|>|u|$ and
	$|v|>|y|$. So there exists $t$ such that $x=ut$ and $v=ty$.  We can assume $|t|<|q|$, since otherwise $|t|=|q|$ and $|uy|=0$, a contradiction. Now since $q^2 = vx=tyut$, we get that $q$ begins and ends with $t$, which means $t^2 \prec q^2$. Therefore $p^2$ has a smaller square than $q^2$, a contradiction.
\end{itemize}
	\qed \end{proof}

Next we show how to construct arbitrarily long simple $k$-powers from
smaller ones. Fix $k = 2\ (\text{resp., }k\geq 3)$ and $m = 3\ (\text{resp., }m=2)$. Let $w_1\in\Sigma_m^*$ be a simple $k$-power. Using the previous lemma, there are infinitely many choices for $w_1$. Let $w_1$ be of length $n$. Define $w_{i+1} \in\Sigma_{m+i}^*$ for $i \geq 1$ recursively by
\begin{align}\label{eq:definitionOfwi}
	w_{i+1} = \cyc_0(w_{i})a_i\cyc_{n^{i-1}}(w_{i})a_i\cyc_{2n^{i-1}}(w_{i})a_i\cdots\cyc_{(n-1)n^{i-1}}(w_{i})a_i
\end{align}
where $a_i = m+i-1$ and $w_{0} = 0$.  The next lemma states that $w_i$, for $i\geq 1$, is a simple
$k$-power. Therefore,
using Corollary \ref{corollary}, each word
$\cyc_0(w_i)$ is $k$-power-free.
For $i\geq 1$, it is easy to see that 
\begin{equation}\label{eq:length}
|w_i|=n|w_{i-1}|=n^{i}.
\end{equation}

\begin{lemma}\label{MakeLongerWords}
For every $i \geq 1$, the word $w_i$ is a simple $k$-power.
\end{lemma}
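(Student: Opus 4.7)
My plan is to induct on $i$. The base case $i=1$ holds by construction. For the inductive step, assume $w_i$ is a simple $k$-power with primitive $k$-th root $q_i$ of length $n^i/k$, and set $B_j := \cyc_{jn^{i-1}}(w_i)$ so that $w_{i+1} = B_0 a_i B_1 a_i \cdots B_{n-1} a_i$.

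First I would check that $w_{i+1}$ is itself a $k$-power: by Lemma~\ref{Lemma:distinctPartialConjugates}, $B_j$ depends only on $j \bmod (n/k)$, so $w_{i+1} = (B_0 a_i B_1 a_i \cdots B_{n/k-1} a_i)^k$.

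For the main work, suppose for a contradiction that $u^k \preceq w_{i+1}$ is a \emph{proper} $k$-power factor. If $u$ contains no $a_i$, then $u^k$ sits inside some $a_i$-free gap (of length $n^i - 1$, equal to one of the $B_j$'s), contradicting Corollary~\ref{corollary}. Otherwise $u$ contains $a_i$: since the $a_i$-positions in $w_{i+1}$ form an arithmetic progression with common difference $n^i$ and $u^k$ has period $|u|$, I get $n^i \mid |u|$, say $|u| = \ell n^i$ with $1 \leq \ell < n/k$ (the upper bound coming from properness). Aligning the $a_i$-positions, the inner blocks of $u^k$ are $B_t, B_{t+1}, \ldots, B_{t+\ell k - 2}$ for some $t$, and periodicity forces $B_{t+j} = B_{t+j+\ell}$ for every $j \in [0, (k-1)\ell - 2]$. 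Whenever this range is nonempty, Lemma~\ref{Lemma:distinctPartialConjugates} forces $n/k \mid \ell$, contradicting $1 \leq \ell < n/k$.

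The only subcase not covered is $k = 2$ with $\ell = 1$, which I expect to be the main obstacle. Here $u^2$ has length $2n^i$ and spans exactly two $a_i$'s, so $u = X a_i Y$ with $|X| + |Y| = n^i - 1$; locating $u^2 = X a_i Y X a_i Y$ inside $w_{i+1}$ forces $X$ to be a common suffix of $B_{t-1}$ and $B_t$, and $Y$ a common prefix of $B_t$ and $B_{t+1}$. Translating via $B_j[r] = w_i[(jn^{i-1} + r) \bmod n^i]$, these conditions yield the cyclic relation $w_i[m] = w_i[(m + n^{i-1}) \bmod n^i]$ for all $m$ outside a cyclic arc of length at most $n^{i-1} + 1$. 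I would finish by a case analysis on how this arc intersects the orbits of $m \mapsto m + n^{i-1}$ on $\mathbb{Z}/n^i\mathbb{Z}$: in every case either every orbit collapses to a single value---forcing $n^{i-1}$ to be a period of $w_i$ and contradicting the minimality of $|q_i| = n^i/k > n^{i-1}$ when $n > k$---or else exactly one orbit splits into two classes of distinct values $V_0 \neq V_1$, which is killed by the $|q_i|$-periodicity of $w_i$ applied to two positions of that orbit differing by $|q_i|$.
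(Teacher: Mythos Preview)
Your approach mirrors the paper's: induction on $i$, the same use of Lemma~\ref{Lemma:distinctPartialConjugates} to see that $w_{i+1}$ is a $k$-th power, and a case split on the $a_i$-content of a hypothetical proper $k$-power factor. The paper splits directly into three cases according to whether $y$ contains at least two, exactly one, or no occurrences of $a_i$; its middle case is a one-line appeal to Lemma~\ref{Lemma:distinctPartialConjugates} (``the distance between two equal partial conjugates of $w_i$ in $w_{i+1}$ is longer than just one letter''), which is transparent for $k\geq 3$---since $y^k$ then exhibits consecutive equal blocks $B_t=B_{t+1}$---but opaque for $k=2$, where only a single full block $vu=B_t$ sits between the two $a_i$'s. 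You instead first establish $n^i\mid |u|$, reduce everything except the subcase $k=2,\ \ell=1$ to a single block-equality contradiction, and then handle that remaining subcase by a genuinely new argument: translating the common-prefix/suffix conditions on $B_{t-1},B_t,B_{t+1}$ into the cyclic relation $w_i[m]=w_i[m+n^{i-1}]$ failing only on an arc of length at most $n^{i-1}+1$, and finishing with an orbit analysis combined with the $n^i/2$-periodicity of $w_i$. This extra work is correct---your arc-length bound checks out, and a length-$(d{+}1)$ arc meets at most one $d$-orbit twice, necessarily at adjacent orbit positions, so the $|q_i|$-periodicity kills the lone potentially deviant value---and it actually fills what looks like a gap in the paper's exposition for $k=2$. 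One small caveat: both branches of your endgame rely on $n>k$ (so that $n^{i-1}<n^i/k$); the degenerate case $n=k$ (e.g.\ $w_1=0^k$ or $w_1=00$) is not covered by your sketch, though it is easy to verify directly.
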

\begin{proof}
By induction on $i$. The word $w_1$ is a simple $k$-power.
Now suppose that $w_{i}$ is a simple $k$-power for some $i \geq 1$.
Using Lemma~\ref{Lemma:distinctPartialConjugates}, we have $\cyc_{jn^{i-1}}(w_i)=\cyc_{(j+\frac{n}{k})n^{i-1}}(w_i)$,
since $\frac{|w_i|}{k} = \frac{n^i}{k}$.

We now claim that $w_{i+1}$ is a $k$-power and
$$w_{i+1} = (\cyc_0(w_{i})a_i\cyc_{n^{i-1}}(w_{i})a_i\cyc_{2n^{i-1}}(w_{i})a_i\cdots\cyc_{(\frac{n}{k}-1)n^{i-1}}(w_{i})a_i)^k.$$
To see this,
suppose that $w_{i+1}$ contains a $k$-power $y^k$ such that $w_{i+1} \neq y^k$.
	
If $y$ contains more than one occurrence of $a_i$, then $y =
ua_i\cyc_j(w_i)a_iv$ for some words $u,v$ and an integer $j$. Since $y^2 =
ua_i\cyc_j(w_i)a_ivua_i\cyc_j(w_i)a_iv \preceq w_{i+1}$, using
\eqref{eq:definitionOfwi} and
Lemma~\ref{Lemma:distinctPartialConjugates}, we get
$$|y|=\left|\cyc_j(w_i)a_ivua_i\right|\geq
\frac{n}{k}n^i=\frac{|w_{i+1}|}{k},$$ and hence $y^k=w_{i+1}$, a
contradiction.

	If $y$ contains just one $a_i$, then $y = ua_iv$ for some words
$u,v$ which contain no $a_i$.  So $y^k = u(avu)^{k-1}av$ for $a =
a_i$.  Therefore $vu$ is a partial conjugate of $w_i$. However the
distance between two equal partial conjugates of $w_i$ in $w_{i+1}$ is
longer than just one letter, using \eqref{eq:definitionOfwi} and
Lemma~\ref{Lemma:distinctPartialConjugates}.

	Finally, if $y$ contains no $a_i$, then a partial conjugate of
	$w_i$ contains a $k$-power, which is impossible due to
	Lemma~\ref{Lemma:conjugatesOfSimplePowers}.  \qed \end{proof}

To make our formulas easier to read, we define $a_0 = w_1[n-1]$.

\begin{theorem}\label{thm}
	For $i \geq 1$, there is a $\dfa D_i$ with $2^{i-1}(n-1)+2$ states such that $\cyc_0(w_i)$ is the shortest $k$-power-free word in $L(D_i)$.
\end{theorem}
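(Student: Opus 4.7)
The plan is to prove the theorem by induction on $i$, paralleling the recursive construction of $w_i$. For the base case $i=1$, take $D_1$ to be the $(n+1)$-state path DFA with states $q_0, \ldots, q_{n-1}$, transitions $\delta(q_j, w_1[j]) = q_{j+1}$ for $0 \le j \le n-2$, initial state $q_0$, accept state $q_{n-1}$, and a single dead state absorbing every other transition. This has $2^0(n-1)+2 = n+1$ states, and $L(D_1) = \{\cyc_0(w_1)\}$, which is $k$-power-free by Corollary \ref{corollary}.

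For the inductive step, I would exploit the factorization
$$\cyc_0(w_{i+1}) = \cyc_0(w_i)\, a_i\, \cyc_{n^{i-1}}(w_i)\, a_i\, \cdots\, a_i\, \cyc_{(n-1)n^{i-1}}(w_i),$$
which presents $\cyc_0(w_{i+1})$ as $n$ partial conjugates of $w_i$ separated by the new letter $a_i$. By Lemma \ref{Lemma:distinctPartialConjugates} only $n/k$ of these partial conjugates are distinct, so a small amount of bookkeeping suffices to track the current ``phase''. I would construct $D_{i+1}$ by taking two copies of $D_i$, identifying their initial states and their dead states (so the copies share exactly these two states, giving $2(2^{i-1}(n-1)+2) - 2 = 2^i(n-1)+2$ states), and adding $a_i$-transitions between the copies that step through the sequence of partial conjugates. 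Then three things need verification: $\cyc_0(w_{i+1}) \in L(D_{i+1})$, by tracing the construction directly; $\cyc_0(w_{i+1})$ is $k$-power-free, by combining Lemma \ref{MakeLongerWords} (since $w_{i+1}$ is a simple $k$-power) with Corollary \ref{corollary}; and no accepted word of length less than $n^{i+1}-1$ is $k$-power-free.

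The third verification is the main obstacle. Any accepting run in $D_{i+1}$ decomposes into subruns inside the two copies of $D_i$ connected by $a_i$-transitions. If any such subrun reads fewer than $|\cyc_0(w_i)|$ letters while accepting inside $D_i$, then the induction hypothesis supplies a $k$-power inside the corresponding factor of the input. The subtler case is that each subrun is individually long enough but the total run is still too short because of unusual $a_i$-jumps between copies; here I would argue that such jumps force two occurrences of the same partial conjugate of $w_i$ within distance less than $n^i$, at which point Lemmas \ref{Lemma:distinctPartialConjugates} and \ref{Lemma:conjugatesOfSimplePowers} can be invoked to produce a $k$-power spanning the two occurrences. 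Carrying out this case analysis---bookkeeping how each $a_i$-transition shifts the phase among the $n/k$ distinct partial conjugates of $w_i$, and ruling out every combinatorial shortcut through the automaton---is where I expect the bulk of the technical effort to lie.
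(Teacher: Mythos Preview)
Your overall architecture---build $D_1$ as a path automaton and double it at each stage---matches the paper, and your state count is right. But there is a genuine gap in the minimality argument, and it is precisely the point you flag as ``the main obstacle.''

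The inductive hypothesis you carry is only the theorem statement: $\cyc_0(w_i)$ is the shortest $k$-power-free word in $L(D_i)$. That hypothesis concerns runs from the initial state of $D_i$ to its accepting state. But the subruns you need to control in $D_{i+1}$ are not of that form. The blocks of $\cyc_0(w_{i+1})$ between successive $a_i$'s are the various partial conjugates $\cyc_{jn^{i-1}}(w_i)$, of which $n/k$ are distinct and all but one differ from $\cyc_0(w_i)$. So in any faithful construction the $a_i$-transitions must enter and leave the copy of $D_i$ at many different states, not just at $q_0$ and $q_{f_i}$; your sentence ``the induction hypothesis supplies a $k$-power inside the corresponding factor'' therefore does not apply, because those subruns are not accepting runs of $D_i$ at all.

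The paper resolves this by strengthening the induction hypothesis substantially: for every state $q_t$ that is the source of an $a_{i-1}$-transition into some $q_{1,j}$, it proves that three specific words---the partial conjugate $\cyc_{(j-1)n^{i-2}}(w_{i-1})$, the prefix $w_i[0..jn^{i-1}-2]$, and the suffix $w_i[(j-1)n^{i-1}..n^i-2]$---are the shortest $k$-power-free labels of the transitions $q_{1,j-1}\to q_t$, $q_{1,0}\to q_t$, and $q_{1,j-1}\to q_{f_i}$ respectively. This three-part invariant is exactly what lets one paste subruns together and rule out shortcuts; the theorem itself is the $j=1$ case of the third part. Your plan would need to discover and prove an invariant of this strength. (A smaller point: in the paper the two copies share the dead state but \emph{not} the initial state; rather, the accepting state of $D_i$ is the one omitted from the second copy. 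This choice is tied to where the $a_i$- and $a_{i-1}$-transitions attach.)
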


\begin{proof}
	Define $D_1 = (Q_1,\Sigma_{a_1},\delta_1,q_{1,0},F_1)$ where
	\begin{align*}
		&Q_{1} := \{q_{1,0},q_{1,1},q_{1,2},\ldots,q_{1,n-1},q_d\},\\
		&F_{1}:=\{q_{1,n-1}\},\\ 
		&\delta_1(q_{1,j},w[j]):=q_{1,j+1}\text{ for } 0\leq j < n-1,
	\end{align*}
	and the rest of the transitions go to the dead state $q_{d}.$  Clearly we have $|Q_1|=n+1$ and $L(D_1) = \{\cyc_0(w_1)\}$.
	
	We define $D_{i}=(Q_{i},\Sigma_{a_i},\delta_{i},q_{1,0},F_i)$ for $i \geq 2$ recursively. 
	For the rest of the proof $s$ and $t$ denote (possibly empty)
sequences of integers and $j$ denotes
a single integer (a sequence of length $1$).
	We use integer sequences as subscripts of states in $Q_{i}$. For example, $q_{1,0}$, $q_{s,j}$, and $q_{s,2,t}$ might denote states of $D_i$. For $i\geq 1$, define 
	\begin{align}
		&Q_{i+1} := Q_i \cup \{q_{i+1,t}:q_t\in (Q_i-F_i)-\{q_d\}\},\label{recursiveDefinition:states}\\
		&F_{i+1}:=\{q_{i+1,i,t}: \delta_{i}(q_{i,t},c)=q_{1,n-1} \text{ for some } c \in \Sigma_{a_i}\},\label{recursiveDefinition:finalState}\\ 
		&\text{if } q_t \in Q_i \text{ and } c \in \Sigma_{a_i},  \text{ then }\delta_{i+1}(q_t,c):=\delta_i(q_t,c)\label{recursiveDefinition:oldTransitions}\\
		&\text{if } q_t,q_s \in (Q_i-F_i)-\{q_d\}, c \in \Sigma_{a_i}, \text{ and }  \delta_i(q_t,c)=q_s,\nonumber\\ &\hspace{2cm}\text{ then } \delta_{i+1}(q_{i+1,t},c):=q_{i+1,s}\label{recursiveDefinition:oldTransitionsRenewed}\\
		&\text{if } q_t \in F_i, \text{ then } \delta_{i+1}(q_t,a_i):=q_{1,1} \text{ and } \delta_{i+1}(q_t,a_{i-1}):=q_{i+1,1,0}\label{recursiveDefinition:transitionType1}\\
		&\text{if } i > 1, q_{i+1,t} \notin F_{i+1}, \text{ and } \delta_i(q_{t},a_{i-1})=q_{1,j},\nonumber\\ &\hspace{2cm}\text{ then } \delta_{i+1}(q_{i+1,t},a_i):=q_{1,j+1}\label{recursiveDefinition:transitionType2}
	\end{align}
	and finally for the special case of $i=1$,
	\begin{equation}
		\delta_2(q_{2,1,j},a_1):=q_{1,j+2} \text{ for } 0\leq j < n-2.\label{recursiveDefinition:transitionType2iequals1}
	\end{equation}
	The rest of the transitions, not indicated 
in \eqref{recursiveDefinition:oldTransitions}--\eqref{recursiveDefinition:transitionType2iequals1}, go to the dead state $q_{d}$. Fig.~\ref{fig:transitionDiagramB} depicts $D_2$ and $D_3$. Using \eqref{recursiveDefinition:states}, we have $|Q_{i+1}|= 2|Q_i|-2=2^i(n-1)+2$ by a simple induction.
	
	An easy induction on $i$ proves that $|F_i|=1$. So let $f_i$ be the appropriate integer sequence for which $F_i = \{q_{f_i}\}$. Using 
\eqref{recursiveDefinition:oldTransitions}--\eqref{recursiveDefinition:transitionType2iequals1}, we get that for every $1\leq j < n$, there exists exactly one state $q_t \in Q_i$ for which $\delta_i(q_t,a_{i-1})= q_{1,j}$. 
		
	By induction on $i$, we prove that for $i\geq 2$ if $\delta_i(q_t,a_{i-1})=q_{1,j}$, then
	\begin{align}
		&x_1=\cyc_{(j-1)n^{i-2}}(w_{i-1}),\label{eq:firstWord}\\
		&x_2=w_i[0..jn^{i-1}-2],\label{eq:secondWord}\\
		&x_3=w_i[(j-1)n^{i-1}..n^i-2].\label{eq:thirdWord}
	\end{align} are the shortest $k$-power-free words for which 
	\begin{align} 
	&\delta_i(q_{1,j-1},x_1)=q_t,\label{eq:firstProperty}\\
	&\delta_i(q_{1,0},x_2)=q_t,\label{eq:secondProperty}\\
	&\delta_i(q_{1,j-1},x_3) = q_{f_i}.\label{eq:thirdProperty}
	\end{align}
	
	In particular, from \eqref{eq:thirdWord} and \eqref{eq:thirdProperty}, for $j=1$, we get that $\cyc_0(w_i)$ is the shortest $k$-power-free word in $L(D_i)$.
	
	The fact that our choices of $x_1,x_2,$ and $x_3$ are $k$-power-free follows from the fact that proper factors of simple $k$-powers are $k$-power-free.
	For $i=2$ the proofs of \eqref{eq:firstProperty}--\eqref{eq:thirdProperty} are easy and left to the readers.
	
	Suppose that \eqref{eq:firstProperty}--\eqref{eq:thirdProperty} hold for some $i \geq 2$. Let us prove \eqref{eq:firstProperty}--\eqref{eq:thirdProperty} for $i+1$. Suppose that 
	\begin{equation}\label{eq:assumption}
		\delta_{i+1}(q_t,a_{i})=q_{1,j}.
	\end{equation} 
	First we prove that the shortest $k$-power-free word $x$ for which 
		$$\delta_{i+1}(q_{1,j-1},x)=q_t,$$
	is
		$x=\cyc_{(j-1)n^{i-1}}(w_i).$
	
	If $q_t \in Q_{i}$, from \eqref{recursiveDefinition:transitionType1} and \eqref{eq:assumption}, we have 
	\begin{align*}
		&q_t = q_{f_i}, \text{ and}\\
		&\delta_{i+1}(q_t,a_i)=q_{1,1}.
	\end{align*}
	By induction hypothesis, the $\cyc_0(w_{i})$ is the shortest $k$-power-free word in $L(D_i)$. In other words, we have $\delta_{i}(q_{1,0},\cyc_0(w_i))=q_{f_i}=q_t$, which can be rewritten using \eqref{recursiveDefinition:oldTransitions} as $\delta_{i+1}(q_{1,0},\cyc_0(w_i))=q_t$.
	
	Now suppose $q_t \notin Q_i$. Then by \eqref{recursiveDefinition:transitionType2} and \eqref{eq:assumption}, we get that there exists $t'$ such that
	\begin{align*}
		&t = i+1,t';\\
		&\delta_i(q_{t'},a_{i-1})=q_{1,j-1}.
	\end{align*}
	From the induction hypothesis, i.e., \eqref{eq:secondProperty} and \eqref{eq:thirdProperty}, we can write
	\begin{align}
	&\delta_i(q_{1,0},w_i[0..(j-1)n^{i-1}-2])=q_{t'},\label{eq:1revisited}\\
	&\delta_i(q_{1,j-1},w_i[(j-1)n^{i-1}..n^i-2])=q_{f_i}.\label{eq:2revisited}
	\end{align}
	In addition $w_i[0..(j-1)n^{i-1}-2]$ and $w_i[(j-1)n^{i-1}..n^i-2]$ are the shortest $k$-power-free transitions from $q_{1,0}$ to $q_{t'}$ and from $q_{1,j-1}$ to $q_{f_i}$ respectively. 
Using \eqref{recursiveDefinition:oldTransitions}, we can rewrite \eqref{eq:1revisited} and \eqref{eq:2revisited} for $\delta_{i+1}$ as follows:
	\begin{align}
	&\delta_{i+1}(q_{1,0},w_i[0..(j-1)n^{i-1}-2])=q_{t'},\label{eq:3revisited}\\
	&\delta_{i+1}(q_{1,j-1},w_i[(j-1)n^{i-1}..n^i-2])=q_{f_i}.\label{eq:4revisited}
	\end{align}
	Note that from \eqref{recursiveDefinition:oldTransitionsRenewed} and \eqref{eq:3revisited}, we get
	\begin{equation}
	\delta_{i+1}(q_{i+1,1,0},w_i[0..(j-1)n^{i-1}-2])=q_{i+1,t'}=q_t.\label{eq:5revisited}
	\end{equation}
	We also have $\delta_{i+1}(q_
	{f_i},a_i)=q_{i+1,1,0}$, using \eqref{recursiveDefinition:transitionType1}. So together with \eqref{eq:4revisited} and \eqref{eq:5revisited}, we get
	$$\delta_{i+1}(q_{1,j-1},\cyc_{(j-1)n^{i-1}}(w_{i}))=q_t$$ and $\cyc_{(j-1)n^{i-1}}(w_{i})$ is the shortest $k$-power-free transition from $q_{1,j-1}$ to $q_t$. 
	
	The proofs of \eqref{eq:secondProperty} and \eqref{eq:thirdProperty} are similar. 
	\qed
\end{proof}

In what follows, all logarithms are to the base 2.

\begin{corollary}\label{corollary:main}
For infinitely many $N$, there exists a $\dfa$ with $N$ states such
that the shortest $k$-power-free word accepted is of length
$N^{{1 \over 4}{\log N} + O(1)}$.
\end{corollary}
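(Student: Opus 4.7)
The plan is to instantiate Theorem~\ref{thm} with a carefully balanced choice of the parameters $n$ (the length of the base simple $k$-power $w_1$) and $i$ (the recursion depth). Theorem~\ref{thm} delivers, for any admissible pair $(n,i)$, a DFA with
\[
N \;=\; 2^{i-1}(n-1)+2
\]
states whose shortest $k$-power-free accepted word, $\cyc_0(w_i)$, has length $n^{i}-1$ by \eqref{eq:length}. So the task reduces to choosing $(n,i)$ to make $n^{i}$ as large as possible relative to $N$.

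Taking logarithms, set $\log n = ci$ for a constant $c>0$. Then $\log N = i + ci + O(1) = (1+c)i + O(1)$ and $\log(n^{i}) = ci^{2}$. Solving $ci^{2} = \alpha\bigl((1+c)i\bigr)^{2}$ for the largest achievable $\alpha$ in terms of $c$ gives $\alpha = c/(1+c)^{2}$, which is maximized at $c=1$, yielding $\alpha = 1/4$. So I will take $n$ to be roughly $2^{i}$.

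Concretely, for $k\geq 3$, Lemma~\ref{simpleCubeExist}(i) supplies simple $k$-powers of every length $k\cdot 2^{m}$, so for each large $i$ I choose $m=m(i)$ with $k\cdot 2^{m} = n \in [2^{i},2^{i+1})$, which forces $\log n = i + O(1)$. For $k=2$, Lemma~\ref{simpleCubeExist}(ii) gives simple squares (over a ternary alphabet) of every even length at least $36$, so I can directly take $n=2^{i}$ for $i$ large enough. In both cases the resulting $w_1$ is a valid simple $k$-power, and Theorem~\ref{thm} applies.

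With this choice, $\log N = 2i + O(1)$, hence $i = \tfrac{1}{2}\log N + O(1)$, and the length $\ell$ of the shortest $k$-power-free word accepted satisfies
\[
\log \ell \;=\; i\log n - O(1) \;=\; i^{2} + O(i) \;=\; \tfrac{1}{4}(\log N)^{2} + O(\log N),
\]
which is exactly $\ell = N^{\frac{1}{4}\log N + O(1)}$. Since $i$ (and hence $N$) ranges over an infinite set, the corollary follows. There is no real obstacle here beyond the bookkeeping: the only point requiring a brief check is that the lengths provided by Lemma~\ref{simpleCubeExist} are dense enough to hit the window $[2^{i},2^{i+1})$ for all sufficiently large $i$, which is immediate in both parts of the lemma.
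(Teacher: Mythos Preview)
Your proposal is correct and follows essentially the same approach as the paper: balance the parameters so that $\log n \approx i$, obtaining $\log N = 2i + O(1)$ and $\log\ell = i^2 + O(i) = \tfrac14(\log N)^2 + O(\log N)$. The paper phrases the balancing as ``let $i=\lfloor\log n\rfloor$'' rather than ``choose $n\in[2^i,2^{i+1})$'', and omits both your optimality computation for $c=1$ and your explicit check that Lemma~\ref{simpleCubeExist} supplies simple $k$-powers of the required lengths, but the argument is the same.
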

\begin{proof}
Let $i = \lfloor \log n \rfloor$ in Theorem~\ref{thm}.
Then $D = D_i$ has 
$$N = 2^{\lfloor \log n\rfloor-1}(n-1)+2 = \Omega(n^2)$$
states. 
In addition, the shortest $k$-power-free word in $L(D)$ is
$\cyc_0\left(w_{\lfloor \log n \rfloor}\right)$.
Now, using \eqref{eq:length} 
we can write 
$$\left|\cyc_0(w_{\lfloor \log n \rfloor})\right|
=n^{\lfloor \log n \rfloor}-1.$$
Suppose $2^t \leq n < 2^{t+1} -1$, so that $t = \lfloor \log n \rfloor$
and 
Then $\log N = 2t + O(1)$, so
${1 \over 4}(\log N)^2 = t^2 + O(t)$.
On the other hand $\log |w| = \lfloor \log n \rfloor (\log n) = t (t+ O(1))
= t^2 + O(t)$.    Now $2^{O(t)} = n^{O(1)} = N^{O(1)}$, and
the result follows.
\qed
\end{proof}

\begin{remark}
The same bound holds for overlap-free words.  To do so, we
define a {\it simple overlap} as a word of the form $axaxa$ where
$axax$ is a simple square.  In our construction of the DFAs, we use
complete conjugates of $(ax)^2$ instead of partial conjugates.
\end{remark}

\begin{remark}
The $D_i$ in Theorem~\ref{thm} are defined over the growing
alphabet $\Sigma_{m+i-1}$.
However, we can fix the alphabet to be $\Sigma_{m+1}$.
For this purpose, we introduce $w'_i$ which is quite similar to $w_i$:
\begin{align*}
	&w'_1 = w_1,\\
	&w'_{i+1} = \cyc_0(w'_{i})b_i\cyc_{n^{i-1}}(w'_{i})b_i\cyc_{2n^{i-1}}(w'_{i})b_i\cdots\cyc_{(n-1)n^{i-1}}(w'_{i})b_i,
\end{align*}
where $b_i = mc_im$ such 
that $c_i$ is (any of) the shortest nonempty $k$-power-free word over
$\Sigma_{m}$ not equal to $c_1,\ldots, c_{i-1}$. Clearly we have $|b_i|
\leq |b_{i-1}|+1 = O(i)$, and hence $w'_i = \Theta(n^i)$.

One can then prove Lemma~\ref{MakeLongerWords} and Theorem~\ref{thm} for $w'_i$ with minor modifications of the argument above. In particular, we construct $\dfa D'_i$ that accepts $\cyc_0(w'_i)$ as the shortest $k$-power-free word accepted, 
and a $D'_i$ that is quite similar to $D_i$. In particular, they have asymptotically the same number of states.
\end{remark}

\begin{figure}[H]				            				
\subfloat[][transition diagram of $D_2$]{\label{fig:transitionDiagramA}
\begin{tikzpicture}[->,>=stealth',shorten >=.5pt,auto,node distance=1.5cm, semithick] \tikzstyle{state} = [draw, semithick, fill=white, circle, minimum height=3.0em, minimum width=3.0em, node distance=5.0em]\tikzstyle{every state}=[draw=black]
\node[initial,state]   (A)  {$q_{1,0}$};
\node[state]           (B) [below of=A]  {$q_{1,1}$};
\node[state]           (C) [below of=B]  {$q_{1,2}$};
\node[state]           (D) [below of=C]  {$q_{1,3}$};
\node[draw=none, fill = none] (E) [below of=D]  {$\vdots$};
\node[state]           (F) [below of=E]  {$q_{1,n-2}$};
\node[state]           (G) [below of=F]  {$q_{1,n-1}$};
\node[state]           (H) [below of=G]  {$q_{2,1,0}$};
\node[state]           (I) [below of=H]  {$q_{2,1,1}$};
\node[draw=none, fill = none] (J) [below of=I]  {$\vdots$};
\node[state]           (K) [below of=J]  {$q_{2,1,n-4}$};
\node[state]           (L) [below of=K]  {$q_{2,1,n-3}$};
\node[state,accepting] (M) [below of=L]  {$q_{2,1,n-2}$};

\path (A) edge node {\color{blue}\tiny $w_1[0]$} (B);
\path (B) edge node {\color{blue}\tiny $w_1[1]$} (C);
\path (C) edge node {\color{blue}\tiny $w_1[2]$} (D);
\path (D) edge node {\color{blue}\tiny $w_1[3]$} (E);
\path (E) edge node {\color{blue}\tiny $w_1[n-3]$} (F);
\path (F) edge node {\color{blue}\tiny $w_1[n-2]$} (G);
\path (G) edge node {\color{blue}\tiny $a_0=w_1[n-1]$} (H);
\path (H) edge node {\color{blue}\tiny $w_1[0]$} (I);
\path (I) edge node {\color{blue}\tiny $w_1[1]$} (J);
\path (J) edge node {\color{blue}\tiny $w_1[n-5]$} (K);
\path (K) edge node {\color{blue}\tiny $w_1[n-4]$} (L);
\path (L) edge node {\color{blue}\tiny $w_1[n-3]$} (M);

\path (G) edge [bend left] node {\color{red}\tiny $a_1$} (B);
\path (H) edge [bend left] node {\color{red}\tiny $a_1$} (C);
\path (I) edge [bend left] node {\color{red}\tiny $a_1$} (D);
\path (K) edge [bend left] node {\color{red}\tiny $a_1$} (F);
\path (L) edge [bend left] node {\color{red}\tiny $a_1$} (G);
\end{tikzpicture}
}\subfloat[][transition diagram of $D_3$]{\label{fig:transitionDiagramB}
\begin{tikzpicture}[->,>=stealth',shorten >=.5pt,auto,node distance=1.5cm, semithick] \tikzstyle{state} = [draw, semithick, fill=white, circle, minimum height=3.0em, minimum width=3.0em, node distance=5.0em]\tikzstyle{every state}=[draw=black]

\node[initial,state]   (A)  {$q_{1,0}$};
\node[state]           (B) [below of=A]  {$q_{1,1}$};
\node[state]           (C) [below of=B]  {$q_{1,2}$};
\node[state]           (D) [below of=C]  {$q_{1,3}$};
\node[draw=none, fill = none] (E) [below of=D]  {$\vdots$};
\node[state]           (F) [below of=E]  {$q_{1,n-2}$};
\node[state]           (G) [below of=F]  {$q_{1,n-1}$};
\node[state]           (H) [below of=G]  {$q_{2,1,0}$};
\node[state]           (I) [below of=H]  {$q_{2,1,1}$};
\node[draw=none, fill = none] (J) [below of=I]  {$\vdots$};
\node[state]           (K) [below of=J]  {$q_{2,1,n-4}$};
\node[state]           (L) [below of=K]  {$q_{2,1,n-3}$};
\node[state] (M) [below of=L]  {$q_{2,1,n-2}$};

\path (A) edge node {\color{blue}\tiny $w_1[0]$} (B);
\path (B) edge node {\color{blue}\tiny $w_1[1]$} (C);
\path (C) edge node {\color{blue}\tiny $w_1[2]$} (D);
\path (D) edge node {\color{blue}\tiny $w_1[3]$} (E);
\path (E) edge node {\color{blue}\tiny $w_1[n-3]$} (F);
\path (F) edge node {\color{blue}\tiny $w_1[n-2]$} (G);
\path (G) edge node {\color{blue}\tiny $a_0=w_1[n-1]$} (H);
\path (H) edge node {\color{blue}\tiny $w_1[0]$} (I);
\path (I) edge node {\color{blue}\tiny $w_1[1]$} (J);
\path (J) edge node {\color{blue}\tiny $w_1[n-5]$} (K);
\path (K) edge node {\color{blue}\tiny $w_1[n-4]$} (L);
\path (L) edge node {\color{blue}\tiny $w_1[n-3]$} (M);

\path (G) edge [bend left=50] node {\color{red}\tiny $a_1$} (B);
\path (H) edge [bend left=50] node {\color{red}\tiny $a_1$} (C);
\path (I) edge [bend left=50] node {\color{red}\tiny $a_1$} (D);
\path (K) edge [bend left=50] node {\color{red}\tiny $a_1$} (F);
\path (L) edge [bend left=50] node {\color{red}\tiny $a_1$} (G);

\node[state][xshift=4cm]   (A2)  {$q_{3,1,0}$};
\node[state]           (B2) [below of=A2]  {$q_{3,1,1}$};
\node[state]           (C2) [below of=B2]  {$q_{3,1,2}$};
\node[state]           (D2) [below of=C2]  {$q_{3,1,3}$};
\node[draw=none, fill = none] (E2) [below of=D2]  {$\vdots$};
\node[state]           (F2) [below of=E2]  {$q_{3,1,n-2}$};
\node[state]           (G2) [below of=F2]  {$q_{3,1,n-1}$};
\node[state]           (H2) [below of=G2]  {$q_{3,2,1,0}$};
\node[state]           (I2) [below of=H2]  {$q_{3,2,1,1}$};
\node[draw=none, fill = none] (J2) [below of=I2]  {$\vdots$};
\node[state]           (K2) [below of=J2]  {$q_{3,2,1,n-4}$};
\node[state,accepting]           (L2) [below=.25cm of K2]  {$q_{3,2,1,n-3}$};

\path (A2) edge node {\color{blue}\tiny $w_1[0]$} (B2);
\path (B2) edge node {\color{blue}\tiny $w_1[1]$} (C2);
\path (C2) edge node {\color{blue}\tiny $w_1[2]$} (D2);
\path (D2) edge node {\color{blue}\tiny $w_1[3]$} (E2);
\path (E2) edge node {\color{blue}\tiny $w_1[n-3]$} (F2);
\path (F2) edge node {\color{blue}\tiny $w_1[n-2]$} (G2);
\path (G2) edge node {\color{blue}\tiny $a_0=w_1[n-1]$} (H2);
\path (H2) edge node {\color{blue}\tiny $w_1[0]$} (I2);
\path (I2) edge node {\color{blue}\tiny $w_1[1]$} (J2);
\path (J2) edge node {\color{blue}\tiny $w_1[n-5]$} (K2);
\path (K2) edge node {\color{blue}\tiny $w_1[n-4]$} (L2);

\path (G2) edge [bend right=50] node {\color{red}\tiny $a_1$} (B2);
\path (H2) edge [bend right=50] node {\color{red}\tiny $a_1$} (C2);
\path (I2) edge [bend right=50] node {\color{red}\tiny $a_1$} (D2);
\path (K2) edge [bend right=50] node {\color{red}\tiny $a_1$} (F2);
\path (L2) edge [bend right=50] node {\color{red}\tiny $a_1$} (G2);

\path (M) edge [red,bend left=45]node {\color{green}\tiny $a_2$} (B);
\path (M) edge[red] node[near end] {\color{red}\tiny $a_1$} (A2);
\path (K2) edge[red] node[very near start] {\color{green}\tiny $a_2$} (G);
\path (H2) edge[red] node[very near start] {\color{green}\tiny $a_2$} (D);
\path (G2) edge[red] node[very near start] {\color{green}\tiny $a_2$} (C);
\path[dotted] (I2) edge[red] node[very near start] {\color{green}\tiny $a_2$} (E);
\end{tikzpicture}
}

\caption{transition diagrams}          
\label{fig:transitionDiagram}

\end{figure}
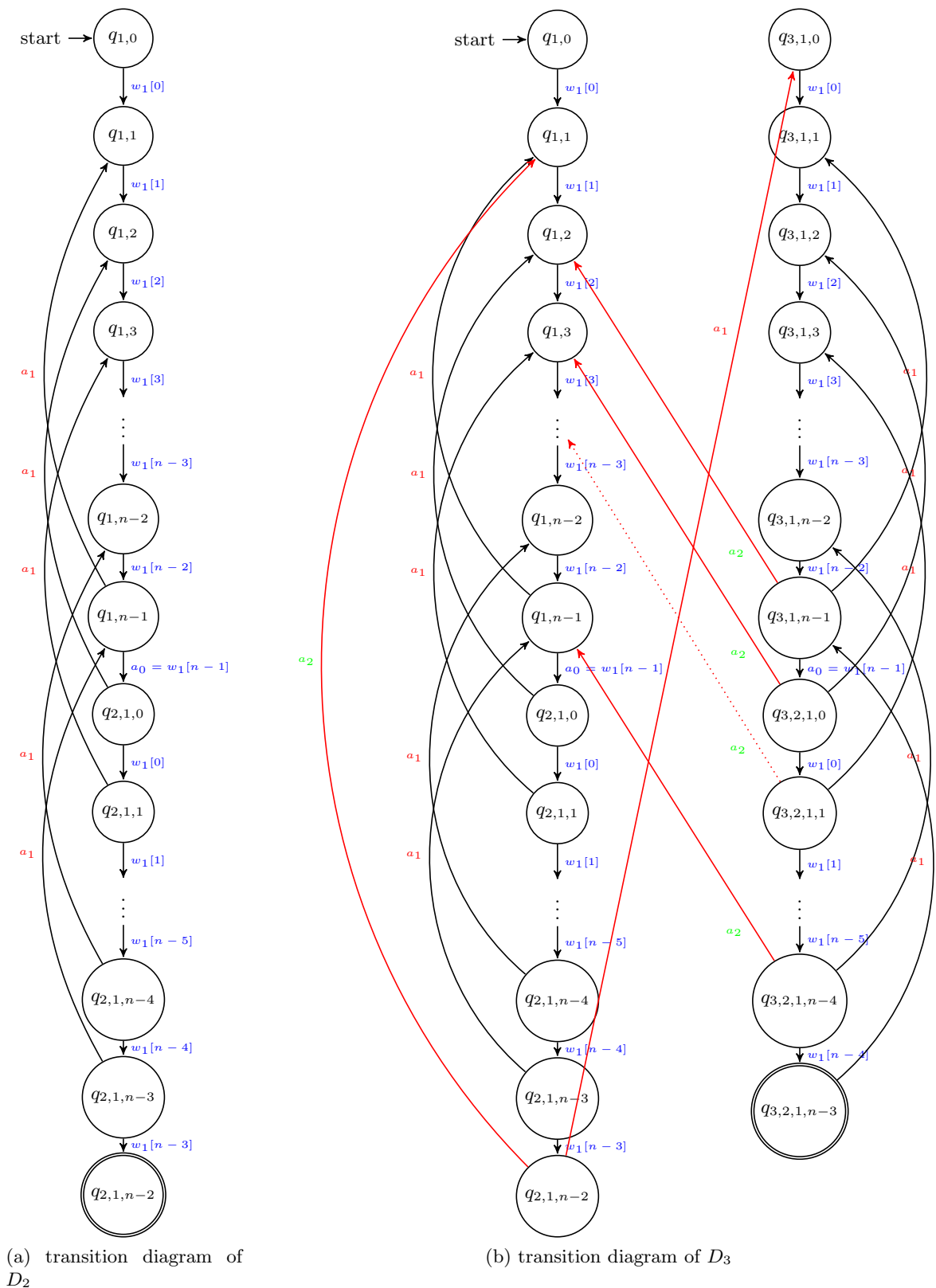

\section{Upper bound for overlap-free words}
In this section, we prove an upper bound on the length of the shortest overlap-free word accepted by a $\dfa D$ over a binary alphabet.

Let $L = L(D)$ and let $R$ be the set of overlap-free words over $\Sigma_2^*$. Carpi \cite{carpi93} defined a certain
operation $\Psi$ on binary languages, and proved that $\Psi(R)$ is regular. We prove that $\Psi(L)$ is also regular, and hence $\Psi(L) \cap \Psi(R)$ is regular. The next step is to apply Proposition~\ref{prop:basicTheorem} to get an upper bound on the length of the shortest word in $\Psi(L) \cap \Psi(R)$. This bound then gives us an upper bound on the length of the shortest overlap-free word in $L$.

Let $H = \{\epsilon,0,1,00,11\}$. Carpi defines maps 
\begin{align*}
	&\Phi_l,\Phi_r: \Sigma_{25} \rightarrow H
\end{align*}
such that for every pair $h,h' \in H$, one has 
\begin{equation*}
	h = \Phi_l(a), h' = \Phi_r(a)
\end{equation*}
for exactly one letter $a \in \Sigma_{25}$. 

For every word $w \in \Sigma_{25}^*$, define $\Phi(w)\in\Sigma_2^*$ inductively by
\begin{equation}\label{eq:PhiDefinition}
	\Phi(\epsilon) = \epsilon, \Phi(aw) = \Phi_l(a)\mu(\Phi(w))\Phi_r(a)  \hspace{20pt} (w \in \Sigma_{25}^*, a \in \Sigma_{25}).
\end{equation}
Expanding \eqref{eq:PhiDefinition} for $w = a_0a_1\cdots a_{n-1}$, we get
\begin{equation}\label{eq:expanded}
	\Phi_l(a_{0})\mu(\Phi_l(a_{1}))\cdots \mu^{n-1}(\Phi_l(a_{n-1}))\mu^{n-1}(\Phi_r(a_{n-1}))\cdots \mu(\Phi_r(a_{1}))\Phi_r(a_{0}).
\end{equation}

For $L\subseteq \Sigma_2^*$ define 
$\Psi(L) = \bigcup_{x\in L} \Phi^{-1}(x)$.
Based on the decomposition of Restivo and Salemi \cite{restivo} for finite overlap-free words, the language $\Psi(x)$ is always nonempty for an overlap-free word $x\in\Sigma_2^*$. The next theorem is due to Carpi \cite{carpi93}. 

\begin{theorem} 
$\Psi(R)$ is regular. 
\end{theorem}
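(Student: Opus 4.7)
The plan is to leverage the Restivo--Salemi decomposition theorem for overlap-free binary words, which asserts that every sufficiently long overlap-free word $x \in R$ admits an essentially unique factorization $x = u\,\mu(y)\,v$ with $u,v \in H$ and $y$ again in $R$. Comparing this with the inductive identity $\Phi(aw) = \Phi_l(a)\,\mu(\Phi(w))\,\Phi_r(a)$, we see that $\Phi$ is designed precisely to encode, letter by letter, the successive Restivo--Salemi decompositions of its image: reading $w = a_0 a_1 \cdots a_{n-1}$ left to right peels off one layer of the decomposition of $\Phi(w)$ at each step. Carpi's choice of $\Sigma_{25}$ as the alphabet that parametrizes pairs $(h,h') \in H \times H$ makes this correspondence bijective once a Restivo--Salemi decomposition has been fixed.

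From this perspective, membership of $w$ in $\Psi(R)$ should reduce to a local condition on the sequence $a_0 a_1 \cdots a_{n-1}$. The reason is that $\mu$ is overlap-free (Thue), so any overlap that appears inside $\Phi(w)$ must straddle at least one of the ``seams'' of the expansion \eqref{eq:expanded}, where a factor $\mu^i(\Phi_l(a_i))$ meets $\mu^{i+1}(\Phi_l(a_{i+1}))$ or symmetrically on the right. My plan is to characterize exactly which finite combinations of consecutive letters $a_i a_{i+1}$, or of a bounded window around a seam, can force an overlap to appear somewhere in $\Phi(w)$, and then to check these by a DFA.

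Concretely I would proceed in three steps. First, invoke the Restivo--Salemi decomposition and its bounded ambiguity to show that the seam structure of \eqref{eq:expanded} is canonical. Second, use the facts that $|H| = 5$, $\mu$ has length $2$, and every element of $H$ has length at most $2$ to bound the ``reach'' of any potential overlap at a seam: an overlap $azaza$ straddling a seam inside $\mu^i(\cdots)$ can be pulled back, via $\mu$, to an overlap at the corresponding seam one level down; iterating this pullback shows that either the overlap localizes to a bounded window of letters of $w$, or it descends all the way into some single $\Phi_l(a_i)\mu(\Phi_l(a_{i+1}))\cdots$ block where it is caught by a finite local pattern. Third, translate the resulting finite list of forbidden length-bounded factors into a DFA over $\Sigma_{25}$ whose states simply remember the last few letters of $w$ read.

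The main obstacle, and the technical heart of Carpi's argument, is the seam analysis in the second step. A priori, an overlap inside $\Phi(w)$ can span many nested $\mu$-layers, and one must argue that in the binary setting it always localizes to a bounded window: either it is witnessed by a constant-length factor of $w$, or, by the overlap-freeness of $\mu$, it descends into the inner $\mu(\Phi(a_1\cdots a_{n-1}))$ and is handled by induction on $|w|$. Controlling this descent and enumerating the finite list of forbidden factors over $\Sigma_{25}$ is the delicate part; once it is complete, regularity of $\Psi(R)$ follows immediately by constructing the product of a short-window lookahead with the trivial finite-state check for those forbidden factors.
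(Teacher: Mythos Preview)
The paper does not give its own proof of this statement: it is quoted as Carpi's theorem \cite{carpi93}, with the additional remark that Carpi's accepting DFA has fewer than $400$ states. There is therefore no argument in the paper to compare your attempt against.

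Your outline is in the right spirit---Carpi's proof does rest on the Restivo--Salemi decomposition and does ultimately reduce membership in $\Psi(R)$ to a finitary, DFA-checkable condition over $\Sigma_{25}$---but, as you yourself flag, it is only a plan. The substantive content is exactly the step you label ``the main obstacle'': showing that any overlap in $\Phi(w)$ can be blamed on a bounded-size pattern in $w$. You have not carried this out, and the informal descent argument you sketch (pull an overlap back through $\mu$ level by level) is not by itself enough: one must control how the borders $\Phi_l(a_i)$, $\Phi_r(a_i)$ interact with the $\mu$-images at adjacent levels and enumerate precisely which combinations are fatal. That enumeration, and the verification that nothing else can go wrong, is the actual work in \cite{carpi93}; without it the proposal remains a proof strategy rather than a proof.
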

Carpi constructed a $\dfa A$ with less than $400$ states that accepts $\Psi(R)$. We prove that $\Psi$ preserves regular languages.

\begin{theorem}\label{thm:preservingRegularLanguages}
Let $D=(Q,\Sigma_2,\delta,q_0,F)$ be a $\dfa$ with $N$ states, and let $L = L(D)$. Then $\Psi(L)$ is regular and is accepted by a $\dfa$ with at most $N^{4N}$ states.
\end{theorem}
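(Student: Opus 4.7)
The plan is to build a DFA $M$ for $\Psi(L)$ that simulates $D$ simultaneously on the outermost prefix and the outermost suffix of $\Phi(w)$, while carrying along enough bookkeeping about $\mu^i$ to process the next input letter. Using the two-sided expansion in \eqref{eq:expanded}, write $\Phi(a_0\cdots a_{i-1}) = \ell_i\cdot r_i$, where $\ell_i = \Phi_l(a_0)\mu(\Phi_l(a_1))\cdots \mu^{i-1}(\Phi_l(a_{i-1}))$ and $r_i = \mu^{i-1}(\Phi_r(a_{i-1}))\cdots \mu(\Phi_r(a_1))\Phi_r(a_0)$. After reading $a_0\cdots a_{i-1}$, the state of $M$ will store four items: the current left-simulation state $p_i := \delta^*(q_0,\ell_i)\in Q$; the function $f_i : Q\to Q$ defined by $f_i(q) := \delta^*(q, r_i)$, which records the effect of the pending right half; and two further functions $T_i(0), T_i(1) \in Q^Q$ defined by $T_i(h)(q) := \delta^*(q, \mu^i(h))$. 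The two $T_i$'s are what enable $M$ to cope with the fact that $i$ grows unboundedly: since $\Phi_l(a), \Phi_r(a) \in H=\{\epsilon,0,1,00,11\}$, the action $T_i(h)$ for any $h\in H$ is determined by $T_i(0)$ and $T_i(1)$ via $T_i(\epsilon)=\operatorname{id}$, $T_i(00)=T_i(0)\circ T_i(0)$, and $T_i(11)=T_i(1)\circ T_i(1)$.

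With that choice of state, the transition rule writes itself. On letter $a\in\Sigma_{25}$, set $h_\ell=\Phi_l(a)$ and $h_r=\Phi_r(a)$, and put $p_{i+1}=T_i(h_\ell)(p_i)$, $f_{i+1}=f_i\circ T_i(h_r)$, $T_{i+1}(0)=T_i(1)\circ T_i(0)$, and $T_{i+1}(1)=T_i(0)\circ T_i(1)$; the last two come from $\mu(0)=01$, $\mu(1)=10$, together with $\mu^{i+1}(h)=\mu^i(\mu(h))$. The initial state is $(q_0,\operatorname{id}_Q,\delta(\cdot,0),\delta(\cdot,1))$, and a state $(p,f,\alpha,\beta)$ is declared final iff $f(p)\in F$, which by construction is precisely the condition $\delta^*(q_0,\Phi(w))\in F$, i.e.\ $w\in\Psi(L)$. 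Correctness of each update is a one-line calculation from $\ell_{i+1}=\ell_i\cdot\mu^i(h_\ell)$, $r_{i+1}=\mu^i(h_r)\cdot r_i$, and the recursion for $\mu^{i+1}$.

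For the state count, $p_i\in Q$ contributes a factor of $N$, and each of $f_i, T_i(0), T_i(1)\in Q^Q$ contributes a factor of $N^N$, giving $N\cdot(N^N)^3 = N^{3N+1}\le N^{4N}$. The only genuine conceptual obstacle is isolating the right invariant to store: beyond the obvious $p_i$ and $f_i$, one must also carry the ``level-$i$ action'' $T_i(0),T_i(1)$, because the contribution of the next input letter depends on $i$, yet there are only $N^N$ functions $Q\to Q$ no matter how large $i$ becomes. Once that observation is in hand, both the determinism of the transitions and the state bound are immediate.
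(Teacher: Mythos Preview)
Your construction is correct and is essentially the same as the paper's: both build a DFA whose state records the left action $\eta_{\Phi_1(w)}$, the right action $\eta_{\Phi_2(w)}$, and the two ``level-$i$'' actions $\eta_{\mu^i(0)},\eta_{\mu^i(1)}$, with identical update rules. The one difference is that the paper stores the full function $\kappa=\eta_{\Phi_1(w)}\colon Q\to Q$ (giving $N^{4N}$ states), whereas you observe that only $\kappa(q_0)$ is ever used at acceptance and store just that element of $Q$, yielding the slightly sharper bound $N^{3N+1}\le N^{4N}$.
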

\begin{proof}
Let $\iota :Q\rightarrow Q$ denote the identity function, and define $\eta_0,\eta_1:Q\rightarrow Q$ as follows
\begin{equation}\label{eq:functinal}
	\eta_i(q) = \delta(q,i) \text{ for } i=0,1.
\end{equation}
For functions $\zeta_0,\zeta_1:Q\rightarrow Q$, and a word $x=b_0b_1\cdots b_{n-1} \in \Sigma_2^*$, define $\zeta_x = \zeta_{b_{n-1}}\circ\cdots\circ \zeta_{b_1}\circ\zeta_{b_{0}}.$ Therefore we have $\zeta_{y}\circ\zeta_{x}=\zeta_{xy}$. Also by convention $\zeta_\epsilon = \iota$. So for example $x \in L(D)$ if and only if $\eta_x(q_0)\in F$.

We create $\dfa D' = (Q',\Sigma_{25},\delta',q_0',F')$ where
\begin{align*}
&Q' = \{[\kappa,\lambda,\zeta_0,\zeta_1] : \kappa,\lambda,\zeta_0,\zeta_1: Q \rightarrow Q\},\\
&\delta'([\kappa,\lambda,\zeta_0,\zeta_1],a) = \left[ \zeta_{\Phi_l(a)}\circ\kappa,\lambda\circ\zeta_{\Phi_r(a)} ,\zeta_1\circ\zeta_0,\zeta_0\circ\zeta_1\right].
\end{align*}
Also let 
\begin{align}
&q_0' = [\iota,\iota,\eta_0,\eta_1],\nonumber\\
&F' = \{[\kappa,\lambda,\zeta_0,\zeta_1]: \lambda\circ\kappa(q_0) \in F\}.\label{eq:finalState}
\end{align}

We can see that $|Q'|=N^{4N}$. We claim that $D'$ accepts $\Psi(L)$.
Indeed, on input $w$, the $\dfa D'$ simulates the behavior of $D$ on $\Phi(w)$. 

Let $w=a_0a_1\cdots a_{n-1}\in \Sigma_{25}^*$, and define 
$$\Phi_1(w)=\Phi_l(a_{a_0})\mu(\Phi_l(a_{1}))\cdots \mu^{n-1}(\Phi_l(a_{n-1})),$$
$$\Phi_2(w)=\mu^{n-1}(\Phi_r(a_{n-1}))\cdots \mu(\Phi_r(a_{1}))\Phi_r(a_{0}).$$
Using \eqref{eq:expanded}, we can write $$\Phi(w)=\Phi_1(w)\Phi_2(w).$$

We prove by induction on $n$ that 
\begin{equation}\label{eq:proofCore}
\delta'(q_0',w)=\left[\eta_{\Phi_1(w)},\eta_{\Phi_2(w)},\eta_{\mu^n(0)},\eta_{\mu^n(1)}\right].
\end{equation}

For $n=0$, we have $\Phi(w)=\Phi_1(w)=\Phi_2(w)=\epsilon$. So 
$$
	\delta'(q_0',\epsilon) = q_0'
	=[\iota,\iota,\eta_0,\eta_1] 
	=[\eta_{\Phi_1(w)},\eta_{\Phi_2(w)},\eta_{\mu^0(0)},\eta_{\mu^0(1)}].
$$


So we can assume \eqref{eq:proofCore} holds for some $n \geq 0$. Now 
suppose $w=a_0a_1\cdots a_n$ and write
\begin{align}
	&\delta'(q_0',a_0a_1\cdots a_n)\nonumber\\
	&=\delta'(\delta'(q_0',a_0a_1\cdots a_{n-1}),a_n)\nonumber\\	
	&=\delta'\left(\left[\eta_{\Phi_1(w[0..n-1])},\eta_{\Phi_2(w[0..n-1])},\eta_{\mu^{n}(0)},\eta_{\mu^{n}(1)}\right],a_n\right)\nonumber\\
	&=\left[ \eta_{\mu^n(\phi_l(a_n))}\circ\eta_{\Phi_1(w[0..n-1])}, \eta_{\Phi_2(w[0..n-1])}\circ\eta_{\mu^n(\phi_r(a_n))}, \eta_{\mu^{n}(1)}\circ\eta_{\mu^{n}(0)},\eta_{\mu^{n}(0)}\circ\eta_{\mu^{n}(1)}\right]\nonumber\\
	&=\left[\eta_{\Phi_1(w)},\eta_{\Phi_2(w)},\eta_{\mu^{n+1}(0)},\eta_{\mu^{n+1}(1)}\right],\label{eq:critical}
\end{align}
and equality \eqref{eq:critical} holds because 
\begin{align*}
	&\Phi_1(w[0..n-1])\mu^n(\phi_l(a_n))=\Phi_1(w),\\
	&\mu^n(\phi_r(a_{n}))\Phi_2(w[0..n-1])=\Phi_2(w),\\
	&\mu^n(0)\mu^n(1)=\mu^n(01)=\mu^n(\mu(0))=\mu^{n+1}(0),\text{ and similarly}\\
	&\mu^n(1)\mu^n(0)=\mu^{n+1}(1).
\end{align*}
Finally, using \eqref{eq:finalState}, we have 
\begin{align*}
w\in L(D') &\iff \delta'(q_0',w)=\left[\eta_{\Phi_1(w)},\eta_{\Phi_2(w)},\zeta_0,\zeta_1\right]\in F'\\
&\iff \eta_{\Phi_1(w)}\circ\eta_{\Phi_2(w)}(q_0) \in F\\ 
&\iff\Phi(w) = \Phi_1(w)\Phi_2(w) \in L(D).
\end{align*}
\qed
\end{proof}

\begin{theorem}
Let $D=(Q,\Sigma_2,\delta,q_0,F)$ be a $\dfa$ with $N$ states. If $D$
accepts at least one overlap-free word, then the length of the shortest
overlap-free word accepted is $2^{O(N^{4N})}.$
\end{theorem}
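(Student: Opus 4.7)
The plan is to leverage the two structural facts about $\Psi$ already in hand: Carpi's construction of a $\dfa$ with fewer than $400$ states accepting $\Psi(R)$, and Theorem~\ref{thm:preservingRegularLanguages}, which provides a $\dfa$ with at most $N^{4N}$ states accepting $\Psi(L)$. A standard product construction then yields a $\dfa$ for $\Psi(L)\cap\Psi(R)$ with at most $400\cdot N^{4N}$ states.

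Next I would verify nonemptiness: if $D$ accepts some overlap-free word $x$, then $x\in L\cap R$, and by the Restivo--Salemi decomposition for finite overlap-free words (the result cited just before Carpi's theorem), $\Phi^{-1}(x)\neq\emptyset$. Any $w\in\Phi^{-1}(x)$ lies in both $\Psi(L)$ and $\Psi(R)$ by the definition of $\Psi$, so $\Psi(L)\cap\Psi(R)\neq\emptyset$. Applying Proposition~\ref{prop:basicTheorem} to the product $\dfa$ produces a word $w\in\Psi(L)\cap\Psi(R)$ with $|w|<400\cdot N^{4N}=O(N^{4N})$. By construction $\Phi(w)$ is then an overlap-free word accepted by $D$.

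The last step is to bound $|\Phi(w)|$ in terms of $|w|$ via the expansion \eqref{eq:expanded}. Since every element of $H$ has length at most $2$ and $|\mu^i(u)|=2^i|u|$, writing $n=|w|$ yields $|\Phi(w)|\leq \sum_{i=0}^{n-1}\bigl(|\mu^i(\Phi_l(a_i))|+|\mu^i(\Phi_r(a_i))|\bigr)\leq 4\sum_{i=0}^{n-1}2^i<2^{n+2}$. Substituting $n=O(N^{4N})$ gives $|\Phi(w)|=2^{O(N^{4N})}$, which is the claimed bound. There is no deep obstacle here once Theorem~\ref{thm:preservingRegularLanguages} and Carpi's theorem are in place; the argument is a straightforward product-plus-pumping pattern, and the only point requiring care is the exponential blow-up in the final length computation, which accounts entirely for the double exponential in $N$ in the statement.
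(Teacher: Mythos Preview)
Your proof is correct and follows essentially the same route as the paper: take the product of Carpi's automaton for $\Psi(R)$ with the $N^{4N}$-state automaton for $\Psi(L)$ from Theorem~\ref{thm:preservingRegularLanguages}, extract a short word via Proposition~\ref{prop:basicTheorem}, and bound $|\Phi(w)|$ exponentially in $|w|$. Your version is in fact slightly more explicit than the paper's in justifying nonemptiness of $\Psi(L)\cap\Psi(R)$ and in computing the length bound on $|\Phi(w)|$.
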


\begin{proof}
Let $L = L(D)$. Using Theorem~\ref{thm:preservingRegularLanguages}, there exists a $\dfa D'$ with $N^{4N}$ states that accepts the language $\Psi(L)$. 

Since $\Psi(R)$ is regular and is accepted by a $\dfa$ with at most
$400$ states, we see that $$K=\Psi(L) \cap \Psi(R)$$ is regular and is
accepted by a $\dfa$ with $O\left(N^{4N}\right)$ states.

Since $L$ accepts an overlap-free word, the language $K$ is nonempty. Using
Proposition~\ref{prop:basicTheorem}, we see that $K$ contains a word $w$ of
length $O\left(N^{4N}\right)$.

Therefore $\Phi(w)$ is an overlap-free word in $L$. By induction, one
can easily prove that $|\Phi(w)| = O\left(2^{|w|}\right)$. Hence we
have $|\Phi(w)| = 2^{O(N^{4N})}$.
\qed
\end{proof}

\end{document}